\newtheorem{theorem}{Theorem}
\newtheorem{lemma}{Lemma}
\newcommand{\secref}{Section~\ref}
\newcommand{\lemref}{Lemma~\ref}
\newcommand{\figref}[2][{}]{\figurename~\ref{#2}#1}
\newcommand*\dif{\mathop{}\!\mathrm{d}}
\newcommand{\pdif}[2]{\dfrac{\partial #1}{\partial #2}}
\begin{document}


\title{Analytical solutions for nonlinear plasma waves with time-varying complex frequency}
\author{B. J. Q. Woods}
	\affiliation{School of Mathematics, University of Leeds, Woodhouse Lane, Leeds, LS2 9JT, United Kingdom}
	\affiliation{Department of Physics, York Plasma Institute, University of York, Heslington, York, YO10 5DD, United Kingdom}

\date{\today}

\begin{abstract}
Bernstein-Kruskal-Greene (or BGK) modes are ubiquitous nonlinear solutions for the 1D electrostatic Vlasov equation, with the particle distribution function $f$ given as a function of the particle energy. Here, we consider other solutions $f = f[\epsilon]$ where the particle energy is equal to the second-order velocity space Taylor expansion of the function $\epsilon(x,v,t)$ near the wave-particle resonance. This formalism allows us to analytically examine the time evolution of plasma waves with time-varying complex frequency $\omega(t) + i \gamma(t)$ in the linear and nonlinear phases. Using a Laplace-like decomposition of the electric potential, we give allowed solutions for the time-varying complex frequencies. Then, we show that $f$ can be represented analytically via a family of basis decompositions in such a system. Using a Gaussian decomposition, we give approximate solutions for contours of constant $f$ for a single stationary frequency mode, and derive the evolution equation for the nonlinear growth of a frequency sweeping mode. For this family of modes, highly nonlinear orbits are found with the effective width in velocity of the island roughly a factor of $\sqrt{2}$ larger than the width of a BGK island.
\end{abstract}

\maketitle



\section{\label{sec:int}Introduction}
Kinetic plasmas can be described by a Boltzmann equation \cite{wesson2011tokamaks} that describes the evolution of the particle distribution function, $f$. The force exerted on each of the particles in the system manifests as a term acting as $\mathbf{F} \cdot \nabla_v$ on the distribution function, where $\nabla_v$ denotes a velocity-space gradient, and $\mathbf{F}$ is typically taken to be the Lorentz force. This leads to nonlinear coupling between the electromagnetic field and the particle distribution function. As a result of this nonlinearity, it is difficult to solve systems with kinetic plasmas analytically.

The method of linearisation allows one to derive Landau damping or inverse Landau damping of electrostatic plasma waves in the system under the assumption that the nonlinear coupling is sufficiently small \cite{landau1946vibrations}. However, this coupling leads to wave-wave instabilities which affect the nonlinear stability of the system. This can lead to destabilisation \cite{rosenbluth1997theory,lesur2016nonlinear} as well as stabilisation \cite{lin1998turbulent,anderson2017statistical}. These nonlinear scenarios have been partially investigated analytically, however these systems are typically examined using computational means \cite{vann2003fully,lang2010nonlinear,gorelenkov2018resonance,woods2018stochastic}.

The phenomenon of solitons is widely documented \cite{korteweg1895change,zabusky1965interaction,emplit1987picosecond}, where nonlinear structures form due to spatial coupling between waves in the system. In kinetic plasmas, nonlinear coupling in momentum space allows for structures to form on the distribution function known as `holes' and `clumps', referring to a local decrease or increase respectively on the spatially averaged distribution function \cite{berk1997spontaneous}. These structures propagate in a non-dispersive manner similar to solitons. However, as they move through momentum space they draw free energy from the system even when at constant amplitude.

\begin{figure}[h!!]
    \includegraphics[width=0.45\textwidth]{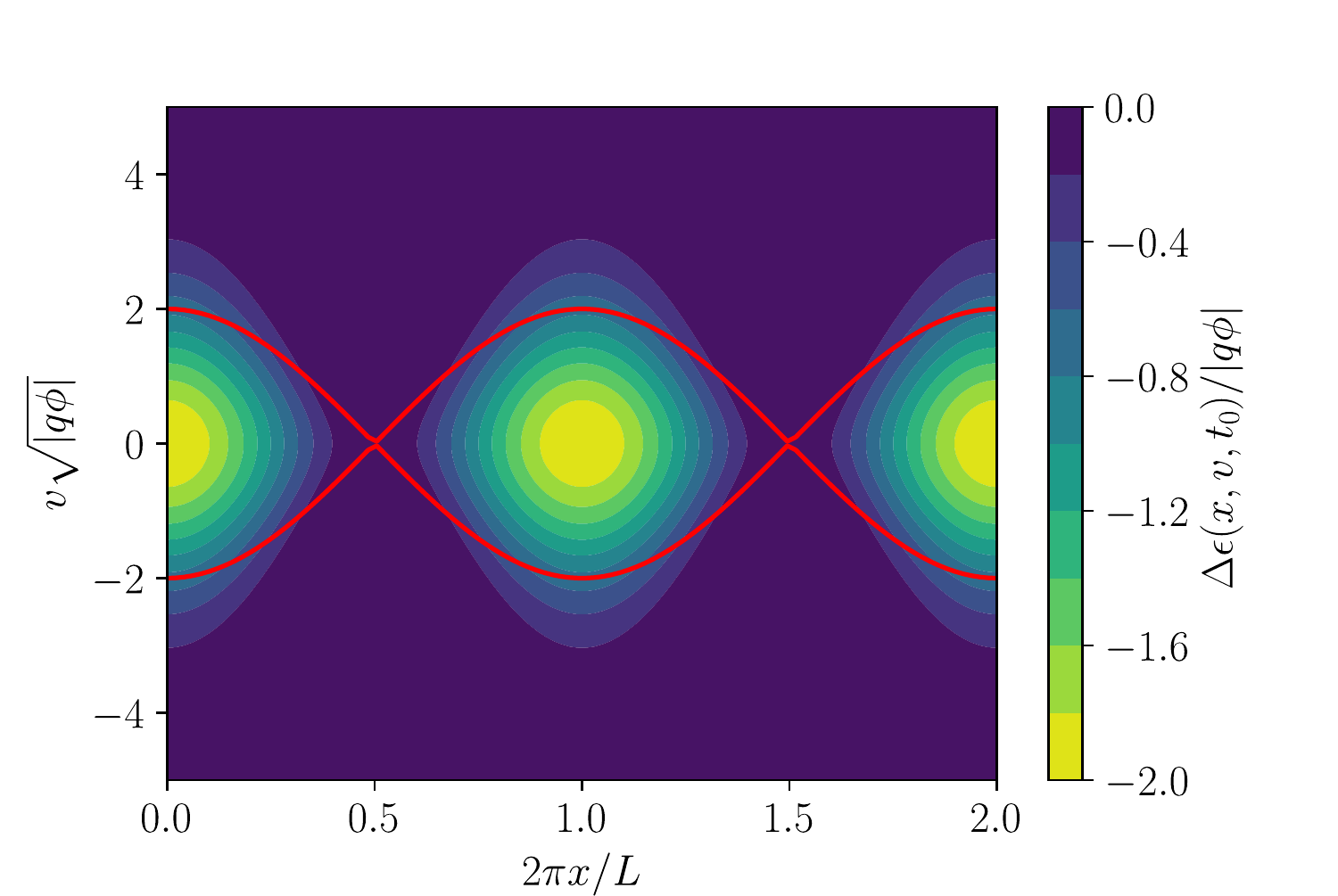}
    \caption{Contours of constant $\Delta \epsilon(x,v,t)$ for a single electric potential wave with time-invariant frequency and constant amplitude undergoing a form of resonant interaction in a 1D kinetic system (derived in \secref{sec:static}). Particles in this mode are approximately trapped within an island of width of $\sim \sqrt{4U/m}$ in velocity - a factor of $\sqrt{2}$ larger than that which is expected from BGK theory \cite{bernstein1957exact} -  where $U = |q\phi_{\textrm{max.}}|$ is the electric potential energy, and $m$ is the particle mass.}
    \label{fig:linear}
\end{figure}

Understanding how these structures are formed may prove crucial for mitigating fast ion loss in tokamaks, as well as other types of kinetic instability. In tokamaks, wave-particle resonance allows for free energy exchange between particles and the electromagnetic field, depending on momentum space gradients. As holes and clumps form, they perturb the local distribution function, allowing for destabilisation of other waves in the system. This cascading effect may play a role in the observation of abrupt large events or mode avalanching in tokamaks \cite{sharapov2002alfven,gryaznevich2006perturbative,podesta2009experimental}, where large spikes in the intensity of magnetic perturbations over a broadband of frequencies can suddenly occur. This is correlated with large amounts of fast ion loss, sharply degrading operational performance of the fusion device.

The Vlasov equation \cite{vlasov1938vibration} has been solved in previous work for the case where the particles follow orbits on which their energy is conserved. \cite{bernstein1957exact}. These types of plasma wave, known as Bernstein-Green-Kruskal (or BGK) modes, result in periodic orbits in phase-space `islands' where trapped particles resonate with the wave. Further work by Berk, Breizman and others \cite{berk1995numerical,berk1996nonlinear,berk1997spontaneous} has explored the movement of holes and clumps through phase-space by assuming that the hole and clump evolve via a temporal continuum of BGK modes. While this work has proven very successful for predicting the frequency sweep associated with these structures, the work does not fully explain how these structures form, and assume a constant slope on the distribution function during structure formation. In addition, this theory is limited to systems near marginal stability; rapidly growing waves and rapidly sweeping waves break the required adiabaticity conditions.

In this vein, work by Lilley \emph{et al.} has shown computationally how holes and clumps form via phase mixing of the wave within a phase-space island \cite{lilley2010effect}. Work by Wang \emph{et al.} has recently examined frequency chirping for the late-time evolution of plasma waves using a mixture of analytical theory and the CHIRP code \cite{wang2018frequency}. However, the threshold for hole-and-clump formation and the formation of holes and clumps on distribution functions with large curvature are both still relatively unexplored. As a result, understanding of hole-and-clump destabilisation is limited, and associated stability thresholds have not been fully investigated.

In \secref{sec:eom}, we review basic properties of the distribution function and its moments. In \secref{sec:evo}, we derive a Krook-like collision operator, and link this exact form to the complex linear dispersion relation for waves near marginal stability in the literature \cite{berk1997spontaneous}. We show that a similar dispersion relation for modes undergoing Landau resonance can be captured in a heuristically similar collisionless model.

In \secref{sec:nonlinear}, we formalise a nonlinear spectral decomposition which allows for nonlinear damped dynamics for the electrostatic potential. Then, in \secref{sec:basis} we show a general family of exact solutions for the Vlasov equation for waves with time-varying complex frequency, defined by a non-holonomic constraint on the system.

In \secref{sec:effective}, we show that it is possible to restrict the family solutions further such that $f = f[\epsilon]$, where the particle energy is equal to the second-order velocity space Taylor expansion of the function $\epsilon(x,v,t)$ near a wave-particle resonance at $v = u$. Here, the single particle orbits are such that the effective potential is given by $\phi$ plus a velocity dependent term (which is identically zero in BGK models), allowing for velocity space `screening' of the resonance.

Later in \secref{sec:effective}, we represent $f$ as a function of the energy-like function $\epsilon$, then show via a suitable basis expansion that one can easily yield approximate solutions for static frequency waves and chirping waves, with an `effective' width in $v$ of the island a factor of $\sqrt{2}$ larger than the BGK island width. Finally, we show that it is possible to construct a `first-order' distribution function that accurately reproduces every moment of the distribution function.

\section{\label{sec:eom}The distribution function}
The evolution of the single species particle distribution function is given by the 1D electrostatic Boltzmann equation with the force field given by the Lorentz force \cite{wesson2011tokamaks}:

\begin{equation}
\label{eq:boltzmann}
\pdif{f}{t} + v \pdif{f}{x} - \dfrac{q}{m} \pdif{\phi}{x} \pdif{f}{v} = \hat{C}f
\end{equation}

where $\hat{C}$ denotes a differential operator acting on the particle distribution function $f(x,v,t)$, $x \in \mathbb{R}$ is position, $v \in \mathbb{R}$ is velocity, $t \in \mathbb{R}_{\geq 0}$ is time, $q$ is the particle charge, $m$ is the particle mass, and $\phi(x,t)$ is the electric potential. The $\hat{C}$ operator encapsulates particle collisions, and other terms which allow for conservation of the rate of change of every moment of the distribution function. 

\subsection{Properties of the distribution function}
\subsubsection{Vanishing at infinity}
The scalar field $f(x,v,t)$ is required to vanish as $|v| \to \infty$:

\begin{equation}
\label{eq:vanishing}
\lim_{v \to \pm \infty} f(x,v,t) \equiv 0
\end{equation}

\subsubsection{Integrable}
The scalar field $f(x,v,t)$ is integrable on its submanifolds $x \in \mathbb{R}$ and $v \in \mathbb{R}$:

\begin{equation}
\label{eq:integrable}
\left|\int\limits_{C} f(x,v,t) \dif \ell\right| < \infty
\end{equation}

where $C$ is a contour that spans two points on $\mathbb{R} \times \mathbb{R}$.

\subsubsection{Moments}
The $l^{\textrm{th}}$ moment of the scalar field $S(v)$ is defined as:

\begin{equation}
M_l[S] \equiv \int\limits_{-\infty}^{\infty} v^{l} S(v) \dif v
\end{equation}

Each and every moment $l \geq 0$ of $f(x,v,t)$ is finite, that is to say:

\begin{equation}
|M_l[f]| < \infty \, \forall \, l \geq 0
\end{equation}

We list the following useful moments of the distribution function:

\begin{subequations}
\begin{equation}
M_0[f] \equiv n(x,t) 
\end{equation}
\begin{equation}
M_1[f] \equiv \dfrac{1}{q} J(x,t) 
\end{equation}
\begin{equation}
M_2[f] \equiv \dfrac{2}{m} T(x,t)
\end{equation}
\end{subequations}

where $n(x,t)$ is the particle density, $J(x,t)$ is the current density, and $T(x,t)$ is the kinetic energy density.

\section{Evolution equations}
\label{sec:evo}
The evolution of the electric potential is given here by the Maxwell-Amp\`{e}re law:

\begin{equation}
\label{eq:ma_law}
0 = \mu_0 \int\limits_{-\infty}^{\infty} qv f \dif v - \mu_0 \epsilon_0 \pdif{}{t} \pdif{\phi}{x}
\end{equation}

where we have taken ${B} = 0$.

\subsection{Energy balance}
By taking the second moment of equation \eqref{eq:boltzmann}:

\begin{equation}
\int\limits_{-\infty}^{\infty} \left[ \dfrac{1}{2} mv^2 \pdif{f}{t} + qvf\pdif{\phi}{x}\right] \dif v \equiv \left( \pdif{T}{t} \right)_{\textrm{C,T}}
\end{equation}

where $(\partial T / \partial t)_{\textrm{C,T}}$ allows for sources and sinks of energy density:

\begin{equation}
\label{eq:sink}
\left( \pdif{T}{t} \right)_{\textrm{C,T}} \equiv \dfrac{1}{2} m \int\limits_{-\infty}^{\infty} v^2 (\hat{C}'f) \dif v 
\end{equation}

where $\hat{C}'f \equiv \hat{C}f - v\partial f/\partial x$ encapsulates collisions as well as spatial advection. Use of equation \eqref{eq:ma_law} allows one to find an energy conservation equation:

\begin{equation}
\label{eq:energy_balance}
\int\limits_{-\infty}^{\infty} \left[ \dfrac{1}{2} mv^2 \pdif{f}{t} \right] \dif v + \dfrac{\epsilon_0}{2} \pdif{}{t} \left(\pdif{\phi}{x}\right)^2 = \left( \pdif{T}{t} \right)_{\textrm{C,T}}
\end{equation}

where the first term denotes the rate of change of energy density in the distribution function, the second term denotes the rate of change of electric field energy density, and the third term denotes sources and sinks of energy density.

\subsection{Berk-Breizman sink}
Berk and Breizman\cite{berk1995numerical} use an energy sink which dissipates plasma waves at a rate $\gamma_D$. This allows for a region of nonlinear stability for the system in the presence of instabilities.

One can represent this sink as the following:

\begin{equation}
\label{eq:bb_sink}
\left( \dfrac{\partial T}{\partial t} \right)_{\textrm{C,T}} = -\gamma_D \dfrac{\epsilon_0}{2} \left(\pdif{\phi}{x}\right)^2
\end{equation}

Therefore, by using equations \eqref{eq:sink}, \eqref{eq:energy_balance} and \eqref{eq:bb_sink} one finds that for a vanishing integrand under $v$:

\begin{equation}
\pdif{}{t} (\hat{C}'f) = -\gamma_D \left[(\hat{C}'f) - \pdif{f}{t}\right]
\end{equation}

This is an inhomogeneous decay equation of the form $y(t)' = -ay + b(t)$. The solution, given by integrating factor, is:

\begin{equation}
\hat{C}'f = e^{-\gamma_D t} \left[\left.\pdif{f}{t}\right|_{t=0} + \gamma_D \int\limits_{0}^{t} e^{\gamma_D \tau} \pdif{f}{\tau} \dif \tau\right]
\end{equation}

If one assumes the contribution from heat diffusion to be negligible in comparison to the contribution from collisions and other sources and sinks, then a Maclaurin expansion in $\gamma_D$ yields:

\begin{equation}
\hat{C}f \approx \left.\pdif{f}{t}\right|_{t=0} e^{-\gamma_D t} + \gamma_D(f - F) + \mathcal{O}(\gamma_D^2)
\end{equation}

where $F \equiv f(x,v,0)$, such that a Krook-like collision operator safely approximates the behaviour to first order in $\gamma_D$. The higher order terms also vanish for early times where $\gamma_D t \ll 1$.

\subsection{Complex linear dispersion relation}
We spectrally decompose the distribution function and other spatially dependent quantities as follows:

\begin{equation}
f(x,v,t) = \dfrac{1}{2} \sum\limits_j \left[f_j(v,t) e^{\textrm{i}k_j x} + \textrm{c.c.}\right]
\end{equation}

where $\{k_j\}$ is a set of wavenumbers, and $\textrm{c.c.}$ denotes the complex conjugate. Using the Berk-Breizman sink, it is possible to linearise equations \eqref{eq:boltzmann} and \eqref{eq:ma_law} to produce the following complex linear dispersion relation for a single mode, under the Krook-like approximation:

\begin{equation}
p|_{t = 0} \approx \dfrac{q^2}{m \epsilon_0} \int\limits_{\Omega} \left.\pdif{f_0}{v}\right|_{t = 0} \, \dfrac{v \dif v}{p|_{t = 0} - \gamma_D + \textrm{i} k_j v}
\end{equation}

where $\Omega$ is the suitable Landau contour for the problem, and $p = \gamma - \textrm{i} \omega$. The linear growth rate is given by $\gamma_L = \gamma(t=0)$, and $\omega(t)$ is the frequency of the mode.

Commonly in the literature, the `modified' version of the Maxwell-Amp\`{e}re law is used instead to generate the same energy balance equation as equation \eqref{eq:energy_balance}:

\begin{equation}
\pdif{}{t} \pdif{\phi_{\textrm{alt.}}}{x} = \dfrac{q}{\epsilon_0} \int\limits_{-\infty}^{\infty} v f \dif v - \gamma_D \pdif{\phi_{\textrm{alt.}}}{x}
\end{equation}

where $\phi_{\textrm{alt.}}$ is a modified electric potential. For these models to be consistent, Maxwell's equations must retain the same canonical form; previous work shows that transformation of the system Lagrangian that preserves the system Hamiltonian is indeed possible under certain conditions. However, for all counterexamples, these models do not form a fully consistent set of equations. \cite{woods2018stochastic}

Using this alternative model, one obtains a very similar complex linear dispersion relation \cite{woods2018stochastic}: 

\begin{equation}
p_{\textrm{alt.}}|_{t = 0} + \gamma_D = \dfrac{q^2}{m \epsilon_0} \int\limits_{\Omega} \left.\pdif{f_0}{v}\right|_{t = 0} \, \dfrac{v \dif v}{p_{\textrm{alt.}}|_{t = 0} + \textrm{i} k_j v}
\end{equation}

This means that while the model is inconsistent (energy is typically not conserved), it produces similar dynamics to a model with Krook-like collisions; the primary difference is that the location of the Landau resonance is shifted from $p = \gamma_D - \textrm{i} k_j v$ to $p = -\textrm{i} k_j v$.

In the nonlinear phase of the systems evolution, it is a well known result that the original resonance undergoes broadening. Therefore, provided that $\gamma_D/k_j$ is much smaller than the resonance width, these models perform well except for energy conservation; as we have already utilised a small $\gamma_D$ approximation to assume Krook-like dissipation, we are safe to make this assumption.

For this reason, in the rest of the paper we will explore $\hat{C}f = 0$, such that the original and `modified' versions of the Maxwell-Amp\`{e}re law are the same. We expect that an extension to this work featuring a Berk-Breizman sink such that $\hat{C}f \neq 0$ could be investigated approximately by still considering a collisionless kinetic equation, and considering a model using the `modified' Maxwell-Amp\'{e}re law. This would allow one to retain the formalism derived here for the collisionless case, provided that the damping is small.

\section{Nonlinear damped dynamics}
\label{sec:nonlinear}
\subsection{Complex frequency evolution}
By coupling the Maxwell-Amp\`{e}re law to the Boltzmann equation, we obtain a system of differential equations which features strong nonlinearities. By taking the time derivative of equation \eqref{eq:ma_law}:

\begin{equation}
\pdif{^2}{t^2} \left(\pdif{\phi}{x}\right) = \dfrac{q}{\epsilon_0} \int\limits_{-\infty}^{\infty} v \pdif{f}{t} \dif v
\end{equation}

From here, one can show that substituting $\partial_t f$ for equation \eqref{eq:boltzmann} gives an equation for $(\partial_x \phi)$ similar to driven simple harmonic motion:

\begin{equation}
\label{eq:forced_shm}
\renewcommand{\arraystretch}{2.8}
\begin{array}{r l}
\pdif{^2}{t^2} \left(\pdif{\phi}{x}\right) &= - \overbrace{\dfrac{q^2}{m \epsilon_0} n}^{\textrm{oscillation}} \left(\pdif{\phi}{x}\right) \\
& \hspace{30pt} + \underbrace{\dfrac{q}{\epsilon_0} \int\limits_{-\infty}^{\infty} v (\hat{C}'f) \dif v}_{\textrm{drive}}
\end{array}
\end{equation}

However, one should note that $f = f(x,v,t)$, and therefore this is only part of a coupled system of equations. We shall solve equation \eqref{eq:forced_shm} by representing the electric potential $\phi$ via a set of wavepackets $\{\phi_l\}$:

\begin{equation}
\phi(x,t) \equiv \sum\limits_{l} \phi_l(t)
\end{equation}

Then, we employ the following solution for each wavepacket:

\begin{equation}
\label{eq:wavepacket}
\begin{array}{r l}
\phi_l(x,t) &\equiv \dfrac{1}{2} \displaystyle\sum\limits_s \Bigg\{\phi_{sl} \exp \left[ i k_l x + \int\limits_{0}^{t} p_{sl}(\tau) \dif \tau\right] \\
&\hspace{50pt} + \textrm{ c.c.} \Bigg\}
\end{array}
\end{equation}

where $\{k_l\}$ is a set of wavenumbers, $\{\phi_{sl}\}$ is a set of wave amplitudes, and $\{p_{sl}\}$ is a set of complex frequencies:

\begin{equation}
p_{sl}(t) = \gamma_{sl}(t) - i \omega_{sl}(t)
\end{equation}

where $\{\gamma_{sl}(t)\}$ is a set of nonlinear growth rates, and $\{\omega_{sl}(t)\}$ is a set of constituent frequencies. Using this solution, one finds the following equation:

\begin{equation}
\label{eq:gsl}
\dfrac{\dif p_{sl}}{\dif t} = -p_{sl}^2 + \Gamma_{sl}^2
\end{equation}

\clearpage

\onecolumngrid

\begin{figure}[t!]
    \centering
    \begin{subfigure}[t]{0.45\textwidth}
        \centering
        \includegraphics[width=\textwidth]{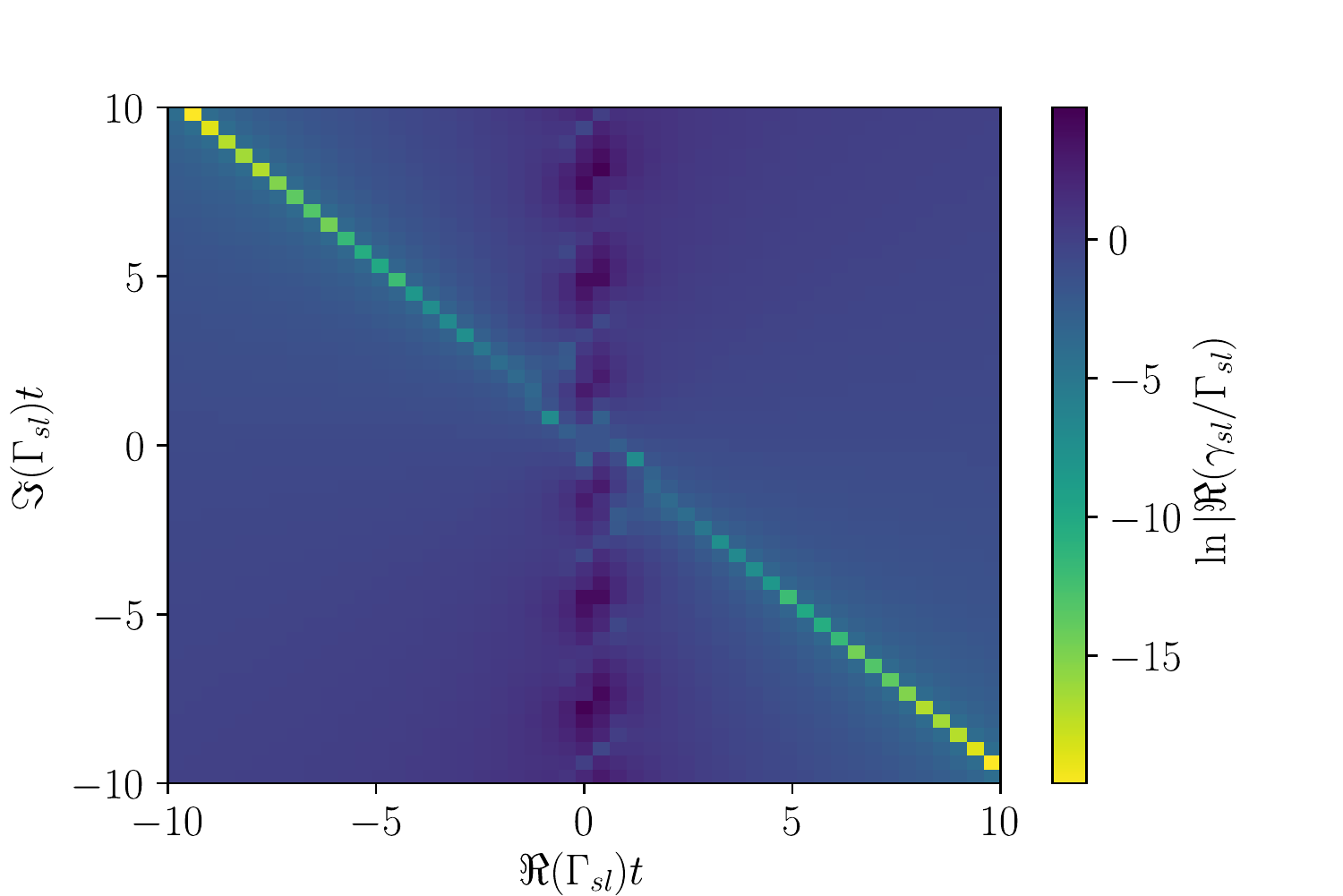}
        \caption{The growth rate is singular at $\Re(\Gamma_{sl}) = 0$ for $t \neq 0$. The condition $\Re(\Gamma_{sl}) = -\Im(\Gamma_{sl})$ leads to $\gamma_{sl}$ = 0. The magnitude of the growth rate asymptotically approaches $|\Re(\Gamma_{sl})|$.}
    \end{subfigure} ~ 
    \begin{subfigure}[t]{0.45\textwidth}
        \centering
        \includegraphics[width=\textwidth]{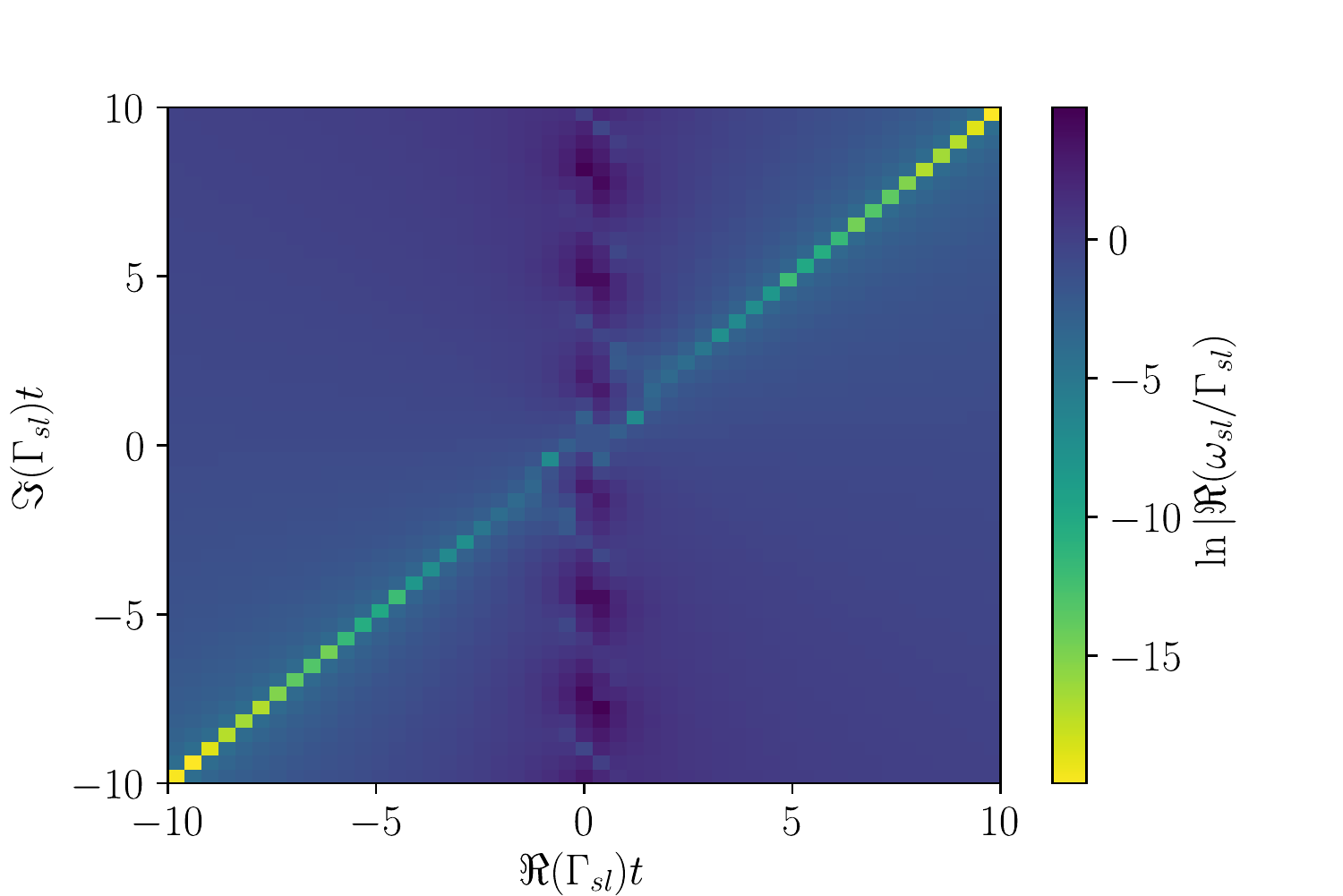}
        \caption{The frequency is singular at $\Re(\Gamma_{sl}) = 0$ for $t \neq 0$. The condition $\Re(\Gamma_{sl}) = \Im(\Gamma_{sl})$ leads to $\omega_{sl}$ = 0. The magnitude of the frequency asymptotically approaches $|\Im(\Gamma_{sl})|$.}
    \end{subfigure} 
    \caption{Logarithmic plots of the growth rate $\omega_{sl}$ and angular frequency $\gamma_{sl}$ for waves in a single species system with $\omega(t=0) = \gamma(t=0) = 0$. The complex constant $\Gamma_{sl}$ is related to the initial system conditions for a wave with wavenumber $k_l$. The $x$-axis features the real part of $\Gamma_{sl}$, while the $y$-axis features the imaginary part of $\Gamma_{sl}$. Evolution of the system can be traced by following a radial path from the centre of the plots. Where $|\Re(\Gamma_{sl})t|  \leq 1$, we observe undulating behaviour in the frequency and growth rate. The amplitude of the undulation decreases exponentially with $t/|\Gamma_{sl}|$, the length of a radial trajectory from the temporal origin at $(0,0)$.}
    \label{fig:gsl}
\end{figure}

\twocolumngrid

where $\{\Gamma_{sl}\}$ is a set of constants, corresponding to rates of change that determine the dynamics of the system:

\begin{equation}
\renewcommand{\arraystretch}{2.5}
\begin{array}{r l}
\Gamma_{sl}^2 &= - \omega_{\textrm{pl.}}^2 + \dfrac{\omega_{\textrm{pl.}}^2}{\bar{n}_0 q (\partial_x \phi)_{sl}} \bigg\{\displaystyle\int\limits_{-\infty}^{\infty} \bigg[(f - F) \partial_x \phi\\
&\hspace{20pt} - m v [\hat{C}' f]_{sl} \bigg] \dif v \bigg\}_{sl}
\end{array}
\end{equation}

where $\omega_{\textrm{pl.}} = \sqrt{\bar{n}_0 q^2/m \epsilon_0}$ is the plasma frequency, $F(v) = f(x,v,t=0)$ is the distribution function at a time $t = 0$, and $\bar{n}_0$ is the spatial average of the particle density at a time $t = 0$.

As there are multiple frequencies for $\phi_l(t)$, it is instructive to note that the complex frequency of the $l^{\textrm{th}}$ mode is made up of contributions from all of the branches.

\begin{theorem}
There is no solution where $\lim_{t \to \infty} \gamma_{sl} \leq 0$ except for the case where $\gamma_{sl}$ is constant.
\end{theorem}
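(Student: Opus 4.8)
The plan is to treat equation \eqref{eq:gsl} as what it is: a Riccati equation with constant coefficients, $\dif p_{sl}/\dif t = \Gamma_{sl}^2 - p_{sl}^2$, whose qualitative behaviour is entirely dictated by its two equilibria $p_{sl} = \pm\Gamma_{sl}$. Separating variables (partial fractions on $1/(\Gamma_{sl}^2 - p_{sl}^2)$) integrates in closed form to
\begin{equation}
p_{sl}(t) = \Gamma_{sl}\tanh\!\left(\Gamma_{sl}t + \beta_{sl}\right),
\end{equation}
where $\beta_{sl}$ is a complex integration constant fixed by $p_{sl}(0)$ through $\tanh\beta_{sl} = p_{sl}(0)/\Gamma_{sl}$, together with the two constant solutions $p_{sl}\equiv\pm\Gamma_{sl}$ recovered as $\Re\beta_{sl}\to\pm\infty$. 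I would first record that ``$\gamma_{sl}$ constant'' is equivalent to ``$p_{sl}$ constant'': setting $\gamma_{sl}'=0$ in the real/imaginary split of \eqref{eq:gsl} (namely $\gamma_{sl}'=\omega_{sl}^2-\gamma_{sl}^2+\Re\Gamma_{sl}^2$ and $\omega_{sl}'=-2\gamma_{sl}\omega_{sl}-\Im\Gamma_{sl}^2$) forces $\omega_{sl}$ constant as well, so the exception named in the theorem is precisely the pair of equilibria $p_{sl}=\pm\Gamma_{sl}$.

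The core of the argument is then an asymptotic analysis of $\gamma_{sl}(t)=\Re\,p_{sl}(t)$ as $t\to\infty$, organised by the sign of $\Re(\Gamma_{sl})$ (the axis highlighted in \figref{fig:gsl}). Writing $\Gamma_{sl}t + \beta_{sl} = a(t) + \mathrm{i}\,b(t)$ with $a(t)=\Re(\Gamma_{sl})\,t + \Re\beta_{sl}$, I would use the identity $\tanh(a+\mathrm{i}b)=[\sinh 2a + \mathrm{i}\sin 2b]/[\cosh 2a + \cos 2b]$ to read off the real part. When $\Re(\Gamma_{sl})\neq 0$ we have $a\to\pm\infty$, so $\tanh\to\pm 1$, $p_{sl}\to\pm\Gamma_{sl}$, and hence $\gamma_{sl}\to|\Re(\Gamma_{sl})| > 0$ for every non-equilibrium solution, independently of $\beta_{sl}$ --- exactly the asymptote $|\Re(\Gamma_{sl})|$ quoted in the figure. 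Thus no non-constant trajectory can satisfy $\lim\gamma_{sl}\le 0$ in this regime; the only solution with $\gamma_{sl}\le 0$ is the repelling equilibrium $p_{sl}=-\mathrm{sgn}[\Re(\Gamma_{sl})]\,\Gamma_{sl}$, which is constant.

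It remains to dispatch the singular line $\Re(\Gamma_{sl})=0$. There $a(t)\equiv\Re\beta_{sl}$ is constant while $b(t)$ grows linearly, and the same identity gives $\gamma_{sl}= -\Im(\Gamma_{sl})\sin 2b/(\cosh 2a + \cos 2b)$, a bounded function that oscillates about zero with vanishing time-average and no limit; since $\liminf\gamma_{sl}<0<\limsup\gamma_{sl}$, the statement ``$\lim\gamma_{sl}\le 0$'' again fails for every non-constant solution, while the equilibria $p_{sl}=\pm\Gamma_{sl}$ (now purely imaginary) give the permitted constant $\gamma_{sl}=0$. I expect the main obstacle to be bookkeeping in this complex-$\tanh$ asymptotics rather than anything conceptual: one must verify uniformity in the complex constant $\beta_{sl}$, confirm that the approach to the attractor is not spoiled by the isolated poles of $\tanh$ (at finite $t$, where the line $\Gamma_{sl}t+\beta_{sl}$ meets a zero of $\cosh$), and set aside the single degenerate corner $\Gamma_{sl}=0$ --- where the two equilibria coalesce and the solution reduces to $p_{sl}=p_{sl}(0)/[1+p_{sl}(0)t]\to 0$, so $\gamma_{sl}\to 0^{+}$. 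This last case lies exactly on the boundary of the claim and is the lone non-generic exclusion, requiring the bracket in $\Gamma_{sl}^2$ to cancel $\omega_{\textrm{pl.}}^2$ identically.
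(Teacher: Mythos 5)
Your proposal is correct and follows essentially the same route as the paper: integrate the Riccati equation \eqref{eq:gsl} in closed form to $p_{sl}(t)=\Gamma_{sl}\tanh(\Gamma_{sl}t+\beta_{sl})$, read off $\lim_{t\to\infty}\gamma_{sl}=|\Re(\Gamma_{sl})|>0$ for every non-equilibrium trajectory, and identify the constant solutions $p_{sl}=\pm\Gamma_{sl}$ as the sole exceptions. If anything you are more careful than the paper, which simply excludes $\Gamma_{sl}\in\{0,p_{sl}(0)\}$ and asserts ``no stable solution'' there, whereas you exhibit the oscillatory non-limit on the $\Re(\Gamma_{sl})=0$ line and the algebraic decay $p_{sl}(t)=p_{sl}(0)/[1+p_{sl}(0)\,t]\to 0$ at the degenerate corner $\Gamma_{sl}=0$ explicitly.
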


\begin{proof}
Equation \eqref{eq:gsl} yields:

\begin{equation}
\label{eq:psl}
p_{sl}(t) = \Gamma_{sl} \tanh \left[\varphi + \Gamma_{sl} t\right]
\end{equation}

where the complex angle $\varphi \equiv \textrm{arctanh} (p_{sl}(0) / \Gamma_{sl})$. We show plots of $p_{sl}$ in \figref{fig:gsl}. If $\Gamma_{sl}$ is equal to the initial complex frequency, then $p_{sl}$ does not change value. For the case where $\Gamma_{sl} = 0$, and $p_{sl} \neq 0$, there is no stable solution. For all other cases:

\[
\forall \, \Gamma_{sl} \not\in \{0, p_{sl}\} : \lim_{t \to \infty} p_{sl}(t) = |\Re(\Gamma_{sl})| \pm i \Im(\Gamma_{sl})
\]

where $\Re(\Gamma_{sl})$ denotes the real part of $\Gamma_{sl}$, $\Im(\Gamma_{sl})$ denotes the imaginary part of $\Gamma_{sl}$, and $\pm$ corresponds to the sign of $\Re(\Gamma_{sl})$. Therefore, no solution exists where $\lim_{t \to \infty} \gamma_{sl} \leq 0$ except for the case where $\gamma_{sl}$ is temporally static.
\end{proof}

By integrating equation \eqref{eq:psl}, one can rewrite equation \eqref{eq:wavepacket} as the following:

\begin{equation}
\phi_l(x,t) = \sum\limits_s \left\{\dfrac{\phi_{sl}}{2} \textrm{e}^{ik_l x} \cosh \left[\varphi + \Gamma_{sl} t\right] + \textrm{c.c.}\right\}
\end{equation}

We note that for the case where $\varphi \to \infty$, $p_{sl}(t) \to p_{sl}(0)$ and therefore:

\begin{equation}
\lim_{\varphi \to \infty} \cosh \left[\varphi + \Gamma_{sl} t\right] =: e^{p_{sl}(0) t}
\end{equation}

We also impose the following physical limit:

\begin{equation}
\label{eq:limit}
\exists \, \{\Gamma_{sl}\} : \lim_{t \to \infty} \phi = \textrm{const.}
\end{equation}

such that $\lim_{t \to \infty} |\phi| < \infty$. However, finding solutions that satisfy this limit is not trivial. One can decompose equation \eqref{eq:gsl} into two real equations:

\begin{subequations}
\begin{equation}
\label{eq:gamma_ev}
\dfrac{\dif \gamma_{sl}}{\dif t} = -(\gamma_{sl}^2 - \omega_{sl}^2) + \Re(\Gamma_{sl}^2)
\end{equation}
\begin{equation}
\label{eq:omega_ev}
\dfrac{\dif \omega_{sl}}{\dif t} = -2\gamma_{sl} \omega_{sl} - \Im(\Gamma_{sl}^2)
\end{equation}
\end{subequations}

These equations can be linearised from a time $t = 0$ where  $\dif \gamma_{sl}/\dif t = 0$. From equation \eqref{eq:gamma_ev}:

\[
\omega_{sl}^2(t) \approx \omega_{\text{pl.}}^2 + \left[\gamma_{sl}^2 + \omega_{\text{pl.}}^2 \lambda^2_{sl}\right]
\]

where we have implicitly discarded some nonlinear terms $\mathcal{O}[(f - F)^2]$, and where the dimensionless constant $\lambda^2_{sl}$ is given by:

\[
\lambda^2_{sl} \approx \Re \int\limits_{-\infty}^{\infty} \dfrac{\left[q(f- F) (\partial_x \phi) - mv (\hat{C}'f)\right]_{sl}}{\bar{n}_0 q (\partial_x \phi)_{sl}} \dif v
\]

As such, one can see that at $t = 0$, for $\gamma_{sl} \ll \omega_{\textrm{pl.}}$ the frequency is approximately equal to the plasma frequency.

\section{Basis expansion of the Vlasov equation}
\label{sec:basis}
The Vlasov equation is given by\cite{vlasov1938vibration}:

\begin{equation}
\pdif{f}{t} + v \pdif{f}{x} - \dfrac{q}{m} \pdif{\phi}{x} \pdif{f}{v} = 0
\end{equation}

We begin by stating:

\[
\forall \, x,v,t \, \exists \, \epsilon: f = f(\epsilon) 
\]

This new variable $\epsilon$ is defined everywhere. Let us further impose that $f(\epsilon)$ is analytic for $\epsilon \in \mathbb{R}$. Then:

\begin{equation}
\dfrac{\dif f}{\dif \epsilon} \left[\pdif{\epsilon}{t} + v \pdif{\epsilon}{x} - \dfrac{q}{m} \pdif{\phi}{x} \pdif{\epsilon}{v} \right] = 0
\end{equation}

Ignoring the trivial solution $\dif f/\dif \epsilon = 0$, we find that $\epsilon$ also satisfies the Vlasov equation:

\begin{equation}
\label{eq:epsilon_vlasov}
\pdif{\epsilon}{t} + v \pdif{\epsilon}{x} - \dfrac{q}{m} \pdif{\phi}{x} \pdif{\epsilon}{v} = 0
\end{equation}

By using this approach, one can represent $f$ as a functional of $\epsilon$, where $\epsilon$ can be chosen to have some physical significance. We use a basis decomposition in $v$ of $\epsilon$:

\begin{equation}
\epsilon(x,v,t) \equiv \sum\limits_j c_j(x,t) g_j(x,v,t)
\end{equation}

This form of the basis decomposition is key, and corresponds to a different basis decomposition at every point in $x$ and $t$. By allowing each $(x,t)$ to permit a different basis decomposition, we allow for nonlinear coupling which is typically not allowed using a variables separable method of the form $f = XT(x,t)V(v)$.

To satisfy equations \eqref{eq:vanishing} and \eqref{eq:integrable}, we desire that $\{g_j(x,v,t)\}$ are square-integrable functions on $v \in \mathbb{R}$, and to do so we make the constraint that $\{g_j(x,v,t)\}$ are compactly supported. Accordingly, $\{g_j(x,v,t)\}$ also vanish at infinity:

\begin{equation}
\lim_{v \to \pm \infty} g_j(x,v,t) = 0 \, \forall \, j
\end{equation}

Then, if we take the $0^{\textrm{th}}$ moment of the reduced Vlasov equation:

\begin{equation}
\int\limits_{-\infty}^{\infty} \left[v \pdif{\epsilon}{x} + \pdif{\epsilon}{t}\right] \dif v = 0
\end{equation}

Using our basis expansion:

\begin{equation}
\label{eq:basis1}
\begin{array}{l}
\displaystyle\sum\limits_j \bigg[\pdif{c_j}{x} M_1[g_j] + \pdif{c_j}{t} M_0[g_j] \\
\hspace{60pt}  + c_j \pdif{M_0[g_j]}{t} + c_j \pdif{M_1[g_j]}{x}  \bigg] = 0
\end{array}
\end{equation}

Next, we perform a co-moving Galilean transform to $g(x,v,t) = g_j(x,v-u_j(t),t)$, where $u_j$ is at this point an undetermined function. Then:

\[
\begin{array}{l}
\displaystyle\int\limits_{-\infty}^{\infty} v g_j \dif v = \int\limits_{-\infty}^{\infty} v g_j \dif (v-u_j) \\
\hspace{20pt} = u_j(t) \displaystyle\int\limits_{-\infty}^{\infty} g_j \dif (v-u_j) + \int\limits_{-\infty}^{\infty} (v-u_j) g_j \dif (v - u_j)
\end{array}
\]

If $g_j$ is symmetric about $v - u_j(t) = 0$, then the second integral vanishes due to antisymmetry. Then:

\begin{equation}
M_1[g_j] = u_j(t) M_0[g_j]
\end{equation}

While it appears that this is a restriction on the solutions that are allowed, in fact any $\epsilon(x,v,t)$ can be represented via such a decomposition. The form of the basis functions has been restricted, but the function it represents has not (c.f. Fourier expansion with phase angles versus Fourier expansion with sines and cosines only).

Therefore from equation \eqref{eq:basis1}, one finds:

\begin{equation}
\label{eq:basis2}
\begin{array}{l}
\displaystyle\sum\limits_j \left\{M_0[g_j] \left[u_j \pdif{c_j}{x} + \pdif{c_j}{t} \right]\right\} \\
\hspace{30pt} = -\displaystyle\sum\limits_j \left\{ c_j \left[u_j \pdif{M_0[g_j]}{x} + \pdif{M_0[g_j]}{t} \right]\right\} \\
\end{array}
\end{equation}

This takes a similar form to a fluid advection equation with a source term. 

At this point, we define an integral mapping $x,t \longmapsto \chi$, where $\chi$ is a matrix of coordinates in instantaneous comoving frames:

\begin{equation}
\chi_{jl} = k_l x - \int\limits_{0}^{t} \omega_{jl} (\tau) \dif \tau
\end{equation}

where $\chi_{jl}/k_l$ represents a Galilean transform to the comoving frame with variable wave speed $u_j \equiv \omega_{jl}(t) / k_l$. In this sense, $\chi$ contains all of the transformations possible with the available frequencies $\{\omega_j\}$ and available wavenumbers $\{k_l\}$ in the system.

Our basis decomposition is therefore in such a form that the symmetry point in $v$ of the even functions $\{g_j\}$ is the variable wave speed of a wave with frequency $\omega_{jl}$ and wavenumber $k_l$. Therefore, one can intuitively posit that such a decomposition yields phase-space structures which move with the frequency of some form of wave in the system. If these waves are chosen to be electrostatic waves in the system, then one expects that the migration of phase-space structures corresponds to nonlinear frequency sweeping moving the wave-particle resonances.

Then, from equation \eqref{eq:basis2}:

\begin{equation}
\sum\limits_{j} \left[ \left. \pdif{}{t}\right|_{\chi} \cdot \left(c_j M_0[g_j] \right) \right] = 0
\end{equation}

This forms a non-holonomic constraint on the system. This result is particularly interesting as it implies that it is not possible for one to set up a suitable Lagrangian for the system with $c_j M_0[g_j]$ as generalised fields. It is therefore our belief that this formalism cannot be derived from an action-angle formulation. In such a formulation, Hamilton's principle assumes only holonomic constraints on the system. In contrast, our formalism yields an analog of d'Alembert's principle, allowing for more exotic scenarios involving non-conservative forces.

We have shown that $\epsilon$ solves the $0^{\textrm{th}}$ moment of the Vlasov equation; by re-examining the Vlasov equation, we can now explicitly represent the mapping between the electric potential $\phi$ and $\epsilon$.

\section{Effective potential}
\label{sec:effective}
\subsection{Motivation from BGK theory}
We expect to recover the BGK solution \cite{bernstein1957exact} near a given resonance $v = u$:

\begin{equation}
\lim_{v \to u} \epsilon(x,v,t) \approx q \phi + \dfrac{1}{2} m(v-u)^2
\end{equation}

where $\phi(x,t)$ is given by:

\begin{equation}
\phi(x,t) = \dfrac{1}{2} \sum\limits_{l} \left\{\phi_l \exp \left[-i \omega_l t + i k_l x\right] + c.c. \right\}
\end{equation}

and where $u \equiv \omega_l / k_l$ for all frequencies $\{\omega_l\}$ and all wavenumbers $\{k_l\}$. The BGK solution is only for a single phase velocity. Therefore, we Taylor expand $\epsilon$ around $v = u$:

\[
c_j \left[\left.g_j\right|_{v = u_j} + \left.\pdif{g_j}{v}\right|_{v = u_j} \hspace{-20pt} (v-u_j) + \dots \right] \approx q \phi + \dfrac{1}{2} m(v-u)^2
\]

By inspection, we find the possible approximation:

\begin{equation}
\epsilon(x,v,t) \approx q \phi \left[ 1 + \dfrac{m(v-u)^2}{2 q \phi} + \cdots \right]
\end{equation}

With a form satisfying this being a Gaussian:

\begin{equation}
\epsilon(x,v,t) = q \phi \exp \left[ \dfrac{m(v-u)^2}{2 q \phi} \right]
\end{equation}

This is not a particularly robust solution: the function is not analytic in $\chi$ at $\phi(x,t) = 0$, Gaussian basis functions are not orthonormal, and $g_j$ does not have the correct growth rate. However, every moment of this function is analytic in $x$ if $\phi(x) \leq 0$:

\begin{equation}
\left|\dfrac{\dif}{\dif x} M_l \left[ \epsilon \right] \right| < \infty ,\ \forall \, l
\end{equation}

Furthermore, in this approximate form, particles enjoy an \emph{effective potential} in phase-space which is dependent on their velocity. Particles far from the resonance are `screened' in a manner similar to Debye shielding, however this phenomenon occurs in momentum space, not real space.

To illustrate this further, one can represent $\epsilon$ in the form:

\begin{equation}
\epsilon(x,v,t) = \epsilon_0 + \mathcal{O}(v-u)^3
\end{equation}

where $\epsilon_0 = q \phi + \frac{1}{2}m(v-u)^2$ is the particle energy. As particles follow phase-space orbits given by contours of constant $f$, the orbit is given by an orbit where the potential energy have effectively been modified by terms of $\mathcal{O}(v-u)^3$.

These  extra terms allow for nonlinear orbits where particles exchange differing amounts of energy with the wave depending on how close they are to the resonance; this is in better accordance with the Landau damping \cite{landau1946vibrations}.

\subsection{Distribution function construction}

\begin{lemma}
\label{lemma:distributive}
The distribution function $f$ can be represented as a functional in the \textbf{distributive form}:

\[
f[\epsilon_0 + \Delta \epsilon] = f[\epsilon_0] + \Delta f[\Delta \epsilon]
\]

where $\epsilon_0$ is an arbitrary function, and $\Delta \epsilon = \epsilon - \epsilon_0$.

\end{lemma}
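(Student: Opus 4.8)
The plan is to lean on the analyticity of $f(\epsilon)$ that was imposed earlier in \secref{sec:basis}, namely that $f$ is analytic for $\epsilon \in \mathbb{R}$. Writing $\epsilon(x,v,t) = \epsilon_0 + \Delta\epsilon$ with $\epsilon_0$ an arbitrary reference function and $\Delta\epsilon = \epsilon - \epsilon_0$, I would treat $f[\epsilon]$ as the pointwise composition $f\big(\epsilon_0(x,v,t) + \Delta\epsilon(x,v,t)\big)$ and Taylor expand about the background value $\epsilon_0$ in powers of the increment $\Delta\epsilon$.

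First I would write
\[
f[\epsilon_0 + \Delta\epsilon] = \sum_{n=0}^{\infty} \dfrac{1}{n!} \left. \dfrac{\dif^n f}{\dif \epsilon^n} \right|_{\epsilon_0} (\Delta\epsilon)^n ,
\]
which converges for increments inside the radius of convergence guaranteed by analyticity. The $n=0$ term is exactly $f[\epsilon_0]$, so isolating it gives
\[
f[\epsilon_0 + \Delta\epsilon] = f[\epsilon_0] + \sum_{n=1}^{\infty} \dfrac{1}{n!} \left. \dfrac{\dif^n f}{\dif \epsilon^n} \right|_{\epsilon_0} (\Delta\epsilon)^n .
\]
I would then \emph{define} the remainder functional
\[
\Delta f[\Delta\epsilon] \equiv \sum_{n=1}^{\infty} \dfrac{1}{n!} \left. \dfrac{\dif^n f}{\dif \epsilon^n} \right|_{\epsilon_0} (\Delta\epsilon)^n ,
\]
which is manifestly assembled entirely from powers of the increment $\Delta\epsilon$ and obeys $\Delta f[0] = 0$. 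This yields the distributive form $f[\epsilon_0 + \Delta\epsilon] = f[\epsilon_0] + \Delta f[\Delta\epsilon]$ directly. To finish, I would verify that $\Delta f$ inherits the admissibility conditions on $f$: since $f[\epsilon]$ and $f[\epsilon_0]$ each vanish as $|v|\to\infty$, are integrable, and possess finite moments of every order, their pointwise difference $\Delta f = f[\epsilon_0 + \Delta\epsilon] - f[\epsilon_0]$ automatically shares these properties, so the split is consistent with the function space in which $f$ lives.

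The main obstacle I anticipate is conceptual rather than computational: the coefficients $f^{(n)}(\epsilon_0)/n!$ carry a residual dependence on the reference $\epsilon_0$, so $\Delta f$ is strictly a functional of $\Delta\epsilon$ alone only once $\epsilon_0$ is fixed as a background. I would address this by treating $\epsilon_0$ as a prescribed reference (for instance the unperturbed energy $q\phi + \frac{1}{2}m(v-u)^2$ of the effective-potential discussion) and reading "distributive" as the additive decomposition into an unperturbed piece plus a perturbation anchored to vanish at $\Delta\epsilon = 0$. I would also remark that demanding genuine independence of $\Delta f$ from $\epsilon_0$ for \emph{all} $\epsilon_0$ would reduce the Cauchy-type functional equation $f(a+b) = f(a) + \Delta f(b)$ to an affine $f$, so the parametric dependence on the reference is unavoidable and is the intended interpretation of the statement.
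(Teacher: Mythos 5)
Your proof is correct, but it takes a genuinely different and more elementary route than the paper. You expand $f$ as a \emph{pointwise} composition $f(\epsilon_0(\vec{\mu}) + \Delta\epsilon(\vec{\mu}))$ and use the ordinary Taylor series in powers of the increment, defining $\Delta f$ as the tail $\sum_{n\geq 1} f^{(n)}(\epsilon_0)(\Delta\epsilon)^n/n!$. The paper instead deliberately \emph{weakens} the pointwise-analyticity assumption of \secref{sec:basis} to the requirement that $f$ admit a functional Taylor expansion about $\epsilon_0$, with terms of the form $\int_{\Omega^l} \mathcal{C}_l[\epsilon_0]\prod_{j=1}^{l}\Delta\epsilon(\vec{\mu}_j)\,\dif\vec{\mu}_j$ built from functional derivatives $\delta/\delta\epsilon_0(\vec{\mu}_j)$, so that $f$ may depend \emph{nonlocally} on $\epsilon$ across all of $(x,v,t)$. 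Your argument is the special case in which the kernels $\mathcal{C}_l$ are delta-localized, and it fully suffices for the lemma as stated given the $f = f(\epsilon)$ composition introduced earlier; it is also cleaner on convergence, since analyticity of a single-variable function is a sharper hypothesis than the formal functional series the paper writes down. What the paper's generality buys is the machinery used immediately afterwards: the first-order construction $f = f|_{t=0} + \int_{\Omega} (\delta f[\epsilon_0]/\delta\epsilon_0(\vec{\mu}_j))\,\Delta\epsilon\,\dif\vec{\mu}_j + \mathcal{O}[\Delta\epsilon^2]$ and the remark that $\Delta f$ may be expanded in an entirely different basis from $\epsilon_0$ both rely on the nonlocal functional form, which your pointwise expansion cannot express (your first-order term is forced to be $f'(\epsilon_0)\Delta\epsilon$ at the same phase-space point). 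Your closing observation --- that $\Delta f$ depends parametrically on the fixed reference $\epsilon_0$, and that demanding independence for all $\epsilon_0$ would force $f$ to be affine --- is a genuine clarification: the paper has the same dependence hidden in $\mathcal{C}_l[\epsilon_0]$ but never states it, so your reading of ``distributive'' as an $\epsilon_0$-anchored decomposition with $\Delta f[0]=0$ is the right interpretation of both proofs.
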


\begin{proof}
First, we state that the vector $\vec{\mu} \equiv (\begin{array}{c c c}x & v & t\end{array})$. We extend our statement that $f$ is analytic in $\epsilon$ to instead require that $f$ permits a functional Taylor expansion of $f$ about a function $\epsilon_0$: \cite{dreizler2013density}

\begin{equation}
\label{eq:taylor}
f = f[\epsilon_0] + \sum\limits_l \int\limits_{\Omega^l} \left(\prod_{j=1}^l \Delta \epsilon_j \dif \vec{\mu}_j \dfrac{\delta}{\delta \epsilon_0(\vec{\mu}_j)} \right) f[\epsilon_0]
\end{equation}

where $\Omega \in \mathbb{R}^D$ is a D-dimensional closed volume in which $d\vec{\mu}_j$ is the infinitessimal volume element, $\Delta \epsilon_j  \equiv \Delta \epsilon(\vec{\mu}_j)$ and $(\delta/\delta \epsilon)$ denotes the functional derivative with respect to $\epsilon$, given by:

\begin{equation}
\int\limits_{\Omega} \dfrac{\delta f[\epsilon]}{\delta \epsilon(\vec{\mu})} \eta(\mu) \dif \vec{\mu} := \lim_{\varepsilon \to 0} \dfrac{f[\epsilon(\vec{\mu}) - \varepsilon \eta(\vec{\mu})] - f(\epsilon)}{\varepsilon}
\end{equation}

where $\varepsilon \eta(\mu)$ is equivalent to $\Delta \epsilon(\mu)$. It is then possible for us to express equation \eqref{eq:taylor} in the form:

\[
f = f[\epsilon_0] + \sum\limits_l \int\limits_{\Omega^l} \mathcal{C}_l[\epsilon_0(\mu_1),\epsilon_0(\mu_2),\dots] \prod_{j=1}^l (\Delta \epsilon_j \dif {\mu_j})
\]

where $\mathcal{C}_l$ denotes a functional of many functions. Each function has a different domain, but an identical codomain. This codomain is the Banach space defined by $\epsilon_0$. Therefore by inspection, the nonlocal functional $\Delta f$ is given by:

\[
\Delta f[\Delta \epsilon] := \sum\limits_l \int\limits_{\Omega^l} \mathcal{C}_l[\epsilon_0] \prod_{j=1}^l (\Delta \epsilon(\vec{\mu_j}) \dif {\vec{\mu_j}})
\]

\end{proof}

Then, it is possible to show that our perturbation to the distribution function $\delta f$ can be represented using an entirely different basis than that which is employed with $\epsilon_0$.

\begin{lemma}
\label{lemma:zero}
If $\exists \epsilon_0: \partial f[\epsilon_0]/\partial t = 0$, then the following Dirichlet boundary condition must be satisfied:

\[
\Delta f[\Delta \epsilon]|_{t=0} = 0
\]
\end{lemma}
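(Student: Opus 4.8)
The plan is to build the argument directly on the distributive form established in \lemref{lemma:distributive}, $f = f[\epsilon_0] + \Delta f[\Delta\epsilon]$, exploiting the fact that the reference function $\epsilon_0$ is at our disposal. First I would rearrange this identity as $\Delta f[\Delta\epsilon] = f - f[\epsilon_0]$ and evaluate it at $t=0$, giving $\Delta f[\Delta\epsilon]|_{t=0} = F - f[\epsilon_0]|_{t=0}$, where $F \equiv f(x,v,0)$ is the initial distribution introduced earlier. The entire lemma then collapses to establishing the single identity $f[\epsilon_0]|_{t=0} = F$. This reduction is attractive because it converts a statement about a nonlocal functional into a statement about the value of the stationary background at the initial instant.

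Next I would invoke the hypothesis $\partial f[\epsilon_0]/\partial t = 0$ together with the arbitrariness of $\epsilon_0$. Since $f[\epsilon_0]$ is time-independent it coincides with its own initial value for all $t$, so I would fix the otherwise-free reference by taking $\epsilon_0$ to be the energy-like function frozen at the initial instant, $\epsilon_0 = \epsilon|_{t=0}$. This is manifestly a function of $(x,v)$ alone and hence automatically satisfies $\partial_t f[\epsilon_0]=0$, while simultaneously $f[\epsilon_0]|_{t=0} = f(\epsilon|_{t=0}) = f|_{t=0} = F$ and $\Delta\epsilon|_{t=0} = \epsilon|_{t=0}-\epsilon_0 = 0$. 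Substituting into the rearranged identity closes the argument. As an independent check I would confirm this against the explicit series $\Delta f[\Delta\epsilon] = \sum_l \int_{\Omega^l}\mathcal{C}_l[\epsilon_0]\prod_j(\Delta\epsilon_j\,\dif\vec{\mu}_j)$ from the proof of \lemref{lemma:distributive}: every term carries at least one factor of $\Delta\epsilon$, so the functional vanishes whenever its argument does, reproducing $\Delta f|_{t=0}=0$ termwise and verifying compatibility with the functional Taylor expansion \eqref{eq:taylor}.

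The main obstacle I anticipate is interpretive rather than computational, and it turns on which $\epsilon_0$ the lemma intends. \lemref{lemma:distributive} calls $\epsilon_0$ an \emph{arbitrary} function, which is what makes the easy route above available; but the subsequent effective-potential section fixes $\epsilon_0 = q\phi + \tfrac{1}{2}m(v-u)^2$, the BGK particle energy, for which $\Delta\epsilon = \mathcal{O}(v-u)^3$ does \emph{not} vanish at $t=0$. If that is the operative choice, I cannot conclude from $\Delta\epsilon|_{t=0}=0$; instead I must show that the functional $\Delta f$ itself annihilates a generally nonzero initial argument, which is a genuine constraint on the coefficients $\mathcal{C}_l[\epsilon_0]$. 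The hard part will therefore be pinning down the frame in which stationarity $\partial_t f[\epsilon_0]=0$ is asserted — laboratory frame at fixed $(x,v)$, or the comoving coordinate $\chi$ in which the BGK background is genuinely static — and demonstrating that, read in that frame, the time-independence of $f[\epsilon_0]$ forces the initial perturbation to be purely the background, i.e. $f|_{t=0}=f[\epsilon_0]|_{t=0}$, so that $\Delta f|_{t=0}=0$ emerges as the necessary compatibility (Dirichlet) condition rather than an imposed assumption.
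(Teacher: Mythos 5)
Your proposal is correct and follows essentially the same route as the paper: the paper likewise notes that $\epsilon_0$ depends only on $(x,v)$, enforces the stronger constraint $\exists\,\epsilon_0: f[\epsilon_0] \equiv f(x,v,t=0)$ (for which your $\epsilon_0 = \epsilon|_{t=0}$ is an explicit witness), and then obtains $\Delta f[\Delta\epsilon]|_{t=0}=0$ by evaluating at $t=0$ --- phrased there as a Maclaurin expansion of $f$ in $t$ whose zeroth term is $f[\epsilon_0]$, which is just your rearrangement $\Delta f = f - f[\epsilon_0]$ in series form. Your interpretive worry about the BGK choice of $\epsilon_0$ is resolved exactly as you suspect: the lemma's existential quantifier means the decomposition is taken relative to the constructed stationary $\epsilon_0$, i.e.\ your ``easy route'' is the paper's intended reading.
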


\begin{proof}
The functional form of $\epsilon_0$ is solely a function of $x$ and $v$. Therefore, it is possible to enforce the stronger constraint:

\begin{equation}
\exists \, \epsilon_0: f[\epsilon_0] \equiv f(x,v,t=0)
\end{equation}

Accordingly, as $f$ is analytic in $t$, it permits a Maclaurin expansion in $t$. Expanding $f$ as such gives:

\[
\Delta f[\Delta \epsilon] = \left.\sum\limits_l \pdif{^l f}{t^l}\right|_{\epsilon = \epsilon_0} \hspace{-10pt} \dfrac{t^l}{l!}
\]

By setting $t = 0$, the aforementioned Dirichlet boundary condition is found.
\end{proof}

It is therefore possible to construct a `first-order' distribution function by using Lemmas \ref{lemma:distributive} and \ref{lemma:zero}. By splitting $\Delta f$ into a linear and nonlinear contribution:

\begin{equation}
f = f|_{t=0} + \int\limits_{\Omega} \left\{ \dfrac{\delta f[\epsilon_0]}{\delta \epsilon_0(\vec{\mu}_j)} \Delta \epsilon \dif \vec{\mu}_j \right\} + \mathcal{O}[\Delta \epsilon^2]
\end{equation}

By discarding the terms which nonlinearly depend on $\Delta \epsilon$, one can approximate the distribution function. This generalises the technique used by Bernstein, Greene and Kruskal to obtain `first-order' distribution functions for so-called BGK modes.\cite{bernstein1957exact}

It is however also worth noting that under the approximation that $\Delta f$ is a linear functional, the functional is distributive under addition:

\[
f[\epsilon] \approx f[\epsilon_0] + f[\Delta \epsilon]
\]

\subsection{Gaussian expansion}
Hermite functions form an orthonormal basis, and have recently been explored in the literature by J. M. Heninger \emph{et al.} for G-transform based solutions of the linearised Vlasov-Poisson system. \cite{heninger2018integral} Gaussian functions have a similar form to Hermite functions, but are easier to work with due to the lack of preceding Hermite polynomials in the function definition. In addition, the previous subsection motivates the use of Gaussian functions via the ability to reduce to BGK theory.

In this subsection, we will examine solutions of the form $f = f_{(0)}(x,v,t) + \Delta f[\Delta \epsilon]$, under the following Gaussian expansion:

\begin{equation}
\Delta \epsilon(\chi,v,t) = \sum\limits_l W_l(\chi,t) \exp \left[ - \left(\dfrac{v-u_l(t)}{v_{N,l}(\chi,v,t)}\right)^2 \right]
\end{equation}

where $v_{N,l}(\chi,v,t)$ is a normalising function with units of velocity, and $W_l(\chi,t)$ is an enveloping function with units of energy. 

By inserting $\epsilon$ into equation \eqref{eq:epsilon_vlasov} one finds that to solve the Vlasov equation, $v_{N,l}$ is an approximate solution of the following differential equation under $x,v,t$ coordinates:

\begin{widetext}
\begin{equation}
\label{eq:master}
-2W_l \dfrac{(v - u_l)^2}{(v_{N,l})^3} \left[\pdif{}{t} +  v \pdif{}{x} - \dfrac{q}{m} \pdif{\phi}{x} \pdif{}{v} \right] v_{N,l} \approx 2(v-u_l) \dfrac{W_l}{(v_{N,l})^2} \left[\dfrac{\dif u_l}{\dif t} + \dfrac{q}{m} \pdif{\phi}{x}\right] + \left[\pdif{}{t} + v\pdif{}{x}\right] W_l
\end{equation}
\end{widetext}

For the case of a single frequency wave in the system, \eqref{eq:master} is exact. Otherwise, we have assumed that the overlap integral of two Gaussians is approximately zero. As such, for resonant structures which are close to each other in phase-space, the validity of \eqref{eq:master} breaks down.

It is possible to expand the left hand side of equation \eqref{eq:master} using a Laurent series:

\begin{equation}
\label{eq:laurent}
\renewcommand{\arraystretch}{2.5}
\begin{array}{r l}
&\displaystyle\sum\limits_{j=-\infty}^{\infty} L_l^{(j)}(x,t) (v - u_l)^j \\
&\hspace{20pt} = -2W_l \dfrac{(v - u_l)^2}{(v_{N,l})^3} \left[\pdif{}{t} +  v \pdif{}{x} - \dfrac{q}{m} \pdif{\phi}{x} \pdif{}{v} \right] v_{N,l}
\end{array}
\end{equation}

where $\{L_l^{(j)}\}$ are Laurent coefficients. Accordingly, one finds that for the analytic part of equation \eqref{eq:master}, in the limit that $v \to u_l$:

\begin{equation}
\label{eq:Aj}
L_l^{(0)}(x,t) \approx \left[\pdif{}{t} + u_l\pdif{}{x}\right] W_l
\end{equation}

From this point onward, we will seek solutions where $L_l^{(0)}$ is the only term in the Laurent expansion.

\begin{lemma}
\label{lemma:vns}
If the Laurent series expansion of the left hand side of equation \eqref{eq:master} only has one term, $L_l^{(0)} (v - u_l)^0$, then, equation \eqref{eq:master} is approximately solved with:

\[
v_{N,l}^2 \approx - \dfrac{2 W_l}{\partial_x W_l} \left[\dfrac{\dif u_l}{\dif t}+ \dfrac{q}{m} \pdif{\phi}{x}\right]
\]
\end{lemma}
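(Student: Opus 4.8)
The plan is to exploit the single-term Laurent hypothesis by matching the two sides of equation \eqref{eq:master} order-by-order in powers of $(v - u_l)$ about the resonance $v = u_l$. By assumption the left-hand side, expanded as in \eqref{eq:laurent}, collapses to the single coefficient $L_l^{(0)}(x,t)$; equivalently, every Laurent coefficient $L_l^{(j)}$ with $j \neq 0$ vanishes, so the entire left-hand side is independent of $(v-u_l)$. The right-hand side, by contrast, is manifestly a low-order polynomial in $(v-u_l)$, so the strategy reduces to reading off its coefficients and forcing them to be consistent with a left-hand side that carries only a $j=0$ contribution.

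First I would rewrite the advective term on the right-hand side using the splitting $v = u_l + (v - u_l)$, so that
\[
\left[\pdif{}{t} + v \pdif{}{x}\right] W_l = \left[\pdif{}{t} + u_l \pdif{}{x}\right] W_l + (v - u_l)\,\partial_x W_l .
\]
Substituting this into \eqref{eq:master} sorts the right-hand side into a term of order $(v-u_l)^0$, namely $[\partial_t + u_l \partial_x]W_l$, and a term of order $(v-u_l)^1$ whose coefficient is $2(W_l/v_{N,l}^2)[\dif u_l/\dif t + (q/m)\partial_x\phi] + \partial_x W_l$.

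Next I would match coefficients. At order $(v-u_l)^0$ the left-hand side supplies exactly $L_l^{(0)}$, and equating this to $[\partial_t + u_l \partial_x]W_l$ simply recovers equation \eqref{eq:Aj}, so this order is automatically consistent and carries no new information. At order $(v-u_l)^1$ the left-hand side supplies $L_l^{(1)}$, which vanishes by the single-term hypothesis, while the right-hand side supplies the bracket above; setting this coefficient to zero gives
\[
2\,\dfrac{W_l}{v_{N,l}^2}\left[\dfrac{\dif u_l}{\dif t} + \dfrac{q}{m}\pdif{\phi}{x}\right] + \partial_x W_l \approx 0 ,
\]
and solving algebraically for $v_{N,l}^2$ yields the claimed expression.

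I expect the genuine obstacle to be self-consistency rather than algebra. The left-hand side of \eqref{eq:master} carries an explicit prefactor $(v-u_l)^2$ multiplying the operator $[\partial_t + v\partial_x - (q/m)\partial_x\phi\,\partial_v]v_{N,l}$, so the assumption that only $L_l^{(0)}$ survives is not an identity but an approximation. Since $v_{N,l}^2$ turns out to depend on $(x,t)$ alone, I would need to check that this prefactored operator contributes only at higher order in $(v-u_l)$ near the resonance, and can therefore be legitimately absorbed into the neglected coefficients $L_l^{(j\geq 2)}$. This is precisely why the lemma is stated with $\approx$: the truncation is controlled only in a neighbourhood of $v = u_l$, and the division by $(v-u_l)$ in the final step is valid only away from the resonance itself. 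Establishing that the discarded Laurent terms are subdominant in this limit is the crux of the argument.
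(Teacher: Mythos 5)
Your proposal is correct and takes essentially the same route as the paper: the paper substitutes equation \eqref{eq:Aj} into \eqref{eq:master} and cancels the common factor of $(v-u_l)$ (discarding the trivial root $v = u_l$), which is precisely your order-$(v-u_l)^1$ coefficient match after the splitting $v = u_l + (v-u_l)$. Your closing observation---that the left-hand side, with $v_{N,l}^2$ a function of $(x,t)$ alone, contributes only at order $(v-u_l)^2$ and higher and is absorbed into the neglected Laurent coefficients---is exactly what the paper's single-term hypothesis and its $\approx$ sign leave implicit.
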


\begin{proof}
Using the expansion given by equation \eqref{eq:laurent}, equation \eqref{eq:master} takes the form:

\[
\renewcommand{\arraystretch}{2.5}
\begin{array}{r l}
L_l^{(0)}(x,t) &\approx 2(v-u_l) \dfrac{W_l}{(v_{N,l})^2} \left[\dfrac{\dif u_l}{\dif t} + \dfrac{q}{m} \pdif{\phi}{x}\right] \\
& \hspace{40pt} + \left[\pdif{}{t} + v\pdif{}{x}\right] W_l
\end{array}
\]

By substituting equation \eqref{eq:Aj} the lemma is proved if the trivial solution of $(v - u_l) \neq 0$ is ignored.
\end{proof}

\subsection{Example solutions}
Here, we will examine the possible solution:

\begin{equation}
W_l(x,t) = -q [\phi_l(x,t) + \tilde{W}_l(t)]
\end{equation}

where $\phi = \sum\limits_l \phi_l(x,t)$. If each $\phi_l$ has the form (from equation \eqref{eq:wavepacket}):

\[
\phi_l(x,t) \equiv |\phi_{ll}| \exp\left[\int\limits_0^t \gamma_{ll} \dif \tau\right] \cos \left[\chi_{ll} + \theta_l\right]
\]

where $\theta_l$ is some initial phase, then from \lemref{lemma:vns}:

\[
v_{N,l}^2 \approx \dfrac{2 W_l \left[\dfrac{\dif u_l}{\dif t}+ \dfrac{q}{m} \pdif{\phi}{x}\right]}{k_l q\phi_{ll} \exp\left[ \int\limits_0^t \gamma_{ll} \dif \tau\right] \sin \left[\chi_{ll} + \theta_l\right]}
\]

We desire $\Delta \epsilon$ to be smooth, finite, and real-valued everywhere, which in turn requires $0 \leq v_{N,l} < \infty$. To enforce $|v_{N,j}| < \infty$ everywhere:

\begin{equation}
\label{eq:cancel}
\left\{W_l \left[\dfrac{\dif u_l}{\dif t}+ \dfrac{q}{m} \pdif{\phi}{x}\right]\right\}_{\chi_{ll} = - \theta_l + n \pi} = 0
\end{equation}

where $n$ is an integer. This can be satisfied in two ways. Equation \eqref{eq:cancel} can be satisfied if:

\[
\left.
\renewcommand{\arraystretch}{2.4}
\begin{array}{r l}
\tilde{W}_l &\displaystyle= -|\phi_{ll}| \exp\left[\int\limits_0^t \gamma_{ll} \dif \tau\right] \\
0 &\displaystyle= \dfrac{\dif u_l}{\dif t} + \left.\dfrac{q}{m} \pdif{\phi}{x}\right|_{\chi_{ll} = -\theta_l + n\pi}
\end{array} \right\} \, \textrm{for} \, W_l \geq 0
\]

However, one cannot guarantee that this is true for each and every $n$. Therefore, we use the alternative solution:

\[
\left.
\renewcommand{\arraystretch}{2.4}
\begin{array}{r l}
\tilde{W}_l &\displaystyle= |\phi_{ll}| \exp\left[\int\limits_0^t \gamma_{ll} \dif \tau\right] \\
0 &\displaystyle= \dfrac{\dif u_l}{\dif t} + \left.\dfrac{q}{m} \pdif{\phi}{x}\right|_{\chi_{ll} = -\theta_l}
\end{array} \right\} \, \textrm{for} \, W_l \leq 0
\]

such that $\tilde{W}_l$ is the amplitude of the wave with phase velocity $u_l$. Therefore, by examining the rate of change of $u_l$, one finds the equations:

\begin{widetext}
\begin{equation}
\label{eq:matrix}
\dfrac{\dif \omega_{ll}}{\dif t} \approx 
\sum\limits_{j} \dfrac{qk_j k_l}{m} \tilde{W}_{j}(t) \sin \left[\left\{\int\limits_0^t \left(\omega_{jj} - \dfrac{k_j}{k_l}\omega_{ll}\right) \dif \tau - \left(\theta_{j} - \dfrac{k_j}{k_l}\theta_l\right)\right\}\right] \,\, \forall \, l
\end{equation}
\end{widetext}

The system of equations \eqref{eq:matrix} must be consistent with the solution for $\{p_{sl}(t)\}$ given by \eqref{eq:psl}, determining the evolution of $\phi$.

\subsubsection{Stationary frequency}
\label{sec:static}
For the case of a single wave in the system, equation \eqref{eq:matrix} yields a stationary frequency wave. This exact solution is consistent with equation \eqref{eq:gsl} if the growth rate is also static:

\begin{equation}
\label{eq:wll}
\dfrac{\dif p_{ll}}{\dif t} = 0
\end{equation}

Therefore, we find:

\[
v_{N,l}^2 \approx \dfrac{2 q [\phi_l(x,t) + \tilde{W}_l(t)]}{m}
\]

This solution therefore gives to lowest order in $(v-u_l)$:

\[
\Delta \epsilon \approx \dfrac{1}{2} m (v-u_l)^2 - q (\phi_l + \tilde{W}_l)
\]

This gives an estimate for the separatrix width a factor of $\sqrt{2}$ higher, as the particle acts as though it is trapped in a potential well twice as large as the real potential. 

In contrast, if one chose to use $W_l \to - W_l$ instead, $v_{N,l}^2$ would be negative, allowing for us to recover the BGK solution. In such a case, for the full Gaussian representation $\Delta \epsilon$ does not vanish as $v \to \infty$ (as is also the case for BGK modes), and moments of the $\epsilon$ are not real valued (as $v_{N,l}$ is imaginary).

In \figref{fig:linear}, we show the corresponding full solution for $\Delta \epsilon(x,v,t=0)$ under this approximation, and how it deviates from BGK theory away from resonance. Here, we use normalised values of $|\phi_{ll}| = q = m = 1$, and $k = 2 \pi / L$.

\subsubsection{Sweeping frequency}
For the case of two waves in the system with the same wavenumber:

\begin{equation}
\label{eq:wlg}
\dfrac{\dif}{\dif t} \left(\begin{array}{c} \omega_+ \\ \omega_- \end{array}\right) \approx \dfrac{qk^2}{m} \sin \left[ \left\{\int\limits_0^t \tilde{\omega} \dif \tau - \tilde{\theta}\right\} \right] \left(\begin{array}{c} \tilde{W}_- \\ -\tilde{W}_+\end{array}\right)
\end{equation}

where $\tilde{\omega} \equiv \omega_+ - \omega_-$, and $\tilde{\theta} \equiv \theta_+ - \theta_-$. If one examines the case where:

\[
\dfrac{\dif \omega_+}{\dif t} \approx - \dfrac{\dif \omega_-}{\dif t}
\]

such that the frequency sweep is roughly symmetric, then $\tilde{W}_+ \approx -\tilde{W}_-$. This type of mode therefore shares similarities with symmetrically sweeping modes as is observed in experiments on devices such as the Mega Amp Spherical Tokamak (MAST) \cite{gryaznevich2006perturbative}. For either of the waves to be linearly stable, we require $\exists \, t_0: \gamma_{\pm}(t = t_0) = 0$. However, the waves can still be nonlinearly unstable. To assess the nonlinear stability, we combine equations \eqref{eq:gamma_ev} and \eqref{eq:omega_ev}:

\[
\dfrac{\dif \gamma_{\pm}}{\dif t} = - \gamma_{\pm}^2 + \left(\dfrac{\dot{\omega} + \Im(\Gamma_{\pm}^2)}{2 \gamma_{\pm}}\right)^2 + \Re(\Gamma_{\pm}^2)
\]

where $\dot{\omega} = \pm \dif \tilde{\omega} / \dif t$. For nonlinear stability, we require that $\dif \gamma_{\pm} / \dif t < 0$. Therefore:

\[
\lim_{t \to \infty} [\dot{\omega} + \Im(\Gamma_{\pm}^2)]^2 < \gamma_{\pm}^2[\gamma_{\pm}^2 -  \Re(\Gamma_{\pm}^2)]
\]

Additionally, for stability we require that $\lim_{t \to \infty} \dot{\omega} = 0$. Therefore, for nonlinear stability the following inequality must be satisfied:

\begin{equation}
\label{eq:ineq}
\gamma_{\pm}^2 > \dfrac{1}{2} \left[\Re(\Gamma_{\pm}^2) + \sqrt{\Re(\Gamma_{\pm}^2)^2 + 4\Im(\Gamma_{\pm}^2)^2}\right]
\end{equation}

where one can identify that $\Gamma_{\pm}^2 \equiv \lim_{t \to \infty} p_{\pm}^2$. In \figref{fig:sketch}, we sketch the regions defined by \eqref{eq:ineq} with the corresponding sign of $\dif \gamma_{\pm} / \dif t$. Intersections between the line $\gamma_{\pm}$ and $\lim_{t \to \infty} \gamma_{\pm}$ show stable values for the growth rate, corresponding to nonlinearly stable (metastable limit with $\gamma_{\pm} < 0$) and nonlinearly unstable (stable limit with $\gamma_{\pm} > 0$) states.

\begin{figure}[h!!]
    \includegraphics[width=0.45\textwidth]{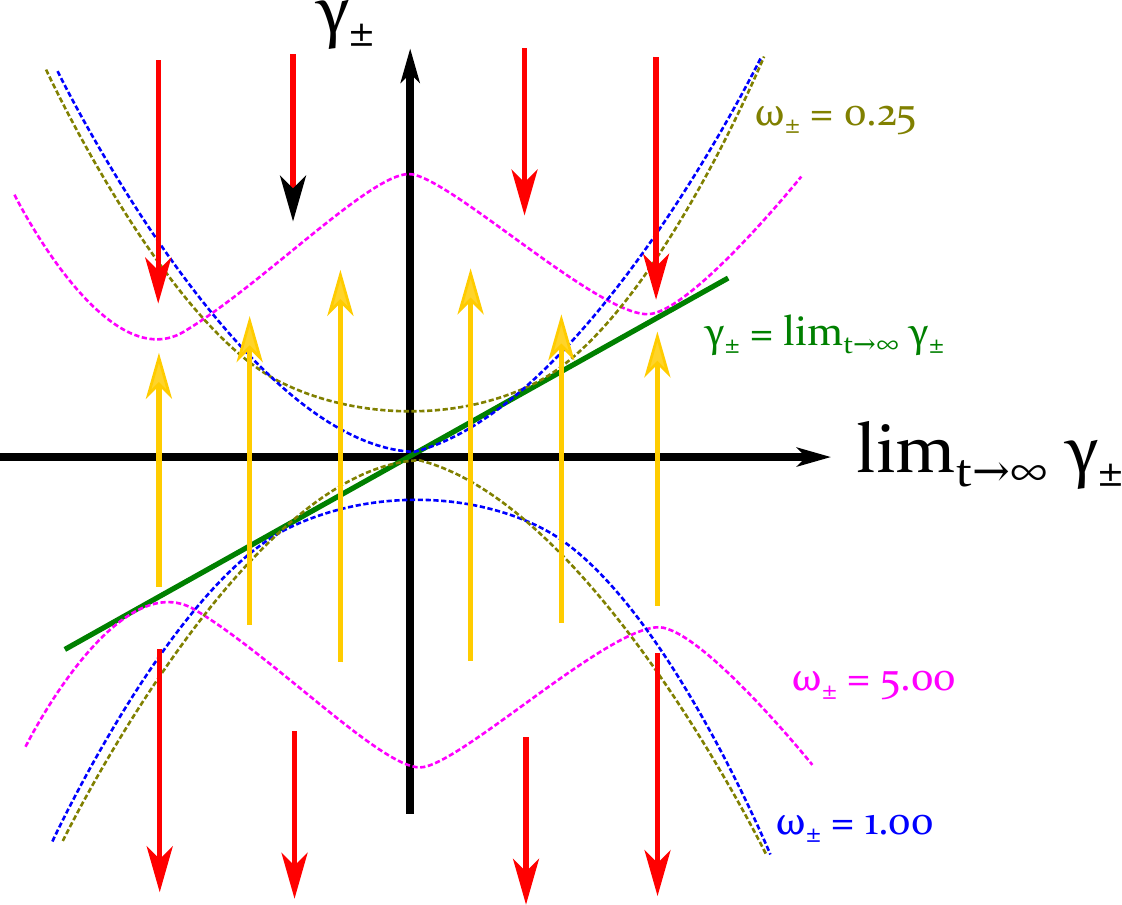}
    \caption{Sketches illustrating boundaries defined by \eqref{eq:ineq}, for $\omega_{\pm} \in {0.25,1,5}$. Yellow arrows and red arrows denote $\dif \gamma_{\pm} / \dif t > 0$ and $\dif \gamma_{\pm} / \dif t < 0$ respectively for $\omega_{\pm} = 5$. Points where the boundaries touch the line $\gamma_{\pm} = \lim_{t \to \infty} \gamma_{\pm}$ indicate stable or metastable values of $\gamma_{\pm}$ for $\gamma_{\pm} > 0$ and $\gamma_{\pm} < 0$ respectively. The stable value therefore corresponds to a nonlinearly unstable solution, while the metastable value corresponds to a nonlinearly stable solution.}
    \label{fig:sketch}
\end{figure}

Everywhere else above the line defined by the negative root of \eqref{eq:ineq}, the wave alternates between increasing and decreasing growth rates. As the choice of initial conditions influences $\Gamma_{\pm}$, some initial conditions will not have a limit for $\gamma_{\pm}$. This allows for either repeated chirping as observed in simulations and experiments \cite{vann2003fully,vanzeeland2019alfven}, or nonlinear instability (if $\int_{t_1}^{t_2} \gamma_{\pm} \dif t > 0$ for $t_2 > t_1$). This may be a candidate for the rapid frequency chirping observed during abrupt large events or mode avalanching in tokamaks \cite{sharapov2002alfven,gryaznevich2006perturbative,podesta2009experimental}. 

Between $\gamma_{\pm} = 0$ and the line defined by the negative root of \eqref{eq:ineq}, the growth rate is negative, but decreasing in amplitude. Waves here are linearly stable, but may become nonlinearly unstable if they are able to cross the $\gamma_{\pm} = 0$ line with finite amplitude. Below the line defined by the negative root of \eqref{eq:ineq}, the growth rate is negative and always decreasing, corresponding to both linear and nonlinear stability.

From equation \eqref{eq:wlg}, we find that the magnitude of the sweeping rate is directly proportion to $\tilde{A}$. Therefore, if there is a large difference in the wave amplitudes, the system here is nonlinearly unstable.

It is worth noting that this solution is only approximately valid for structures which have a very small overlap integral in phase-space. As such, one can consider these long-range, coupled structures in a approximate fashion by evolving a continuum of superposed, stationary frequency solutions.

\subsection{Separatrix}
We define the separatrix as the largest closed contour in phase-space. Here, we examine only a single value of $l$ active in the system. Here, we examine $\Delta \epsilon(\chi,v,t) = c$, an unspecified constant. Points on such a contour only exist for:

\begin{equation}
0 \leq \dfrac{c}{W_l} < 1 
\end{equation}

as $\exp(-z) \in (0,1]$ for positive $z$. As such, if $c < W_l$, no point exists on the contour for the corresponding values of $\chi,t$. Therefore, closed contours exist for $c > \textrm{min}(W_l)$, and the separatrix is given by $c = 0$. At this fixed value of $\Delta \epsilon(\chi,v,t)$:

\[
v_{\textrm{sep.}} = u_l \pm \lim_{c \to 0} \sqrt{-\dfrac{2 W_l}{m} \ln \left( \dfrac{W_l}{c}\right)}
\]

As a result, all the particles here are `trapped'. However in reality, weakly bound particles would scatter out of the potential via neoclassical transport in a model that considers collisions.

In light of this, we use the following fit for an effective separatrix to compare to BGK theory:

\[
v_{\textrm{sep.}} \approx u_l \pm \sqrt{-\dfrac{2 W_l}{m}}
\]

which is given when the structure is at $1/\textrm{e}$ height. In comparison to BGK theory, our value of $W_l$ has twice the amplitude of $\phi_l$. Therefore, the width of the separatrix given by this theory is roughly a factor of $\sqrt{2}$ larger than that which is expected for a BGK island.

\section{Conclusions}
In conclusion, we have solved a variety of problems related to 1D kinetic plasmas, allowing for a detailed description of frequency chirping and basis expansion of the distribution function.

We have shown that a Berk-Breizman sink that linearly dissipates electric field energy from a 1D kinetic plasma can be approximated as a Krook-like collision operator. We have shown in \secref{sec:nonlinear} that it is possible to represent the time-varying frequency of waves in the system via the corresponding growth rates. It was therefore shown in \secref{sec:effective} that it is possible to analytically produce models that account for chirping and nonlinear saturation.

We have shown that the Vlasov equation can be exactly solved for the general case of varying complex frequency waves, and show that a Gaussian basis expansion allows one to find an exact solution for a single wave in the system, and to recover approximate solutions.

We find that under a Gaussian expansion for $\Delta \epsilon$, the separatrix width in velocity for a single mode is a factor of $\sqrt{2}$ larger than the width expected for a BGK island. The family of modes we derive therefore exhibit highly nonlinear orbits, where passing particles for a BGK wave are here interacting with the wave. This implies that mode overlap occurs for modes which are further apart in phase velocity to one another than arises from BGK theory. However, the modes found here may not typically be excited, the island width observed in simulations is typically that which is expected from BGK theory \cite{berk1997spontaneous,meng2018resonance}. 

While not shown here, a logical next step would be to use this formalism to investigate the full Laurent expansion in equation \eqref{eq:laurent}, allowing for a greater range of acceptable solutions. Alternatively, using a Hermite expansion for $\epsilon$ would allow for one to find an exact form of equation \eqref{eq:master} which allows one to examine frequency bifurcation.

\section{Acknowledgements}
The author was funded by the EPSRC Centre for Doctoral Training in Science and Technology of Fusion Energy grant EP/L01663X. This work has been carried out within the framework of the EUROfusion Consortium and has received funding from the Euratom research and training programme 2014-2018 under grant agreement No 633053. The views and opinions expressed herein do not necessarily reflect those of the European Commission. 

The author would like to thank V. N. Duarte, Z. Lin and H. Hezaveh for useful discussions. The author would also like to thank K. Imada and R. G. L. Vann for valuable help with improving this manuscript.

\small
\bibliographystyle{unsrt}
\bibliography{biblio}

\end{document}